\newtheorem{lemma}{Lemma}
\newtheorem{theorem}{Theorem}
\newtheorem{corollary}{Corollary}
\newtheorem{claim}{Claim}
\begin{document}

\title{Decoding Frequency Permutation Arrays  under Infinite norm}

\author{
\authorblockN{Min-Zheng Shieh}
\authorblockA{Department of Computer Science\\
National Chiao Tung University \\
1001 University Road, Hsinchu, Taiwan\\
Email: mzhsieh@csie.nctu.edu.tw}
\and
\authorblockN{Shi-Chun Tsai}
\authorblockA{Department of Computer Science\\
National Chiao Tung University \\
1001 University Road, Hsinchu, Taiwan\\
Email: sctsai@cs.nctu.edu.tw}
}
%

\maketitle

\begin{abstract}
A frequency permutation array (FPA) of length $n=m\lambda$ and distance $d$ is a set of permutations on a multiset over $m$ symbols, where each symbol appears exactly $\lambda$ times and the distance between any two elements in the array is at least $d$. FPA  generalizes the notion of permutation array. In this paper, under the distance metric $\ell_\infty$-norm, we first prove lower and upper bounds on the size of FPA. Then we give a construction of FPA with efficient encoding and decoding capabilities. Moreover, we show 
our design is   locally decodable, i.e., we can decode a message bit by reading at most $\lambda+1$ symbols, which has an
interesting application  for private information retrieval.
\end{abstract}

\section{Introduction}

Let  $n, m$ and $\lambda$ be positive integers with $n=m\lambda$, 
and $S_{n}^\lambda$ be the set of all permutations on the multiset 
$\{\overbrace{1,\dots,1}^\lambda,\dots,\overbrace{m,\dots,m}^\lambda\}$. A {\em frequency permutation array} (FPA) is a subset of $S_n^\lambda$ for some positive integers $m$, $\lambda$ and $n=m\lambda$. A $(\lambda,n,d)$-FPA is a subset of $S_{n}^\lambda$ and the distance between any pair of distinct permutations is at least $d$ under any metric, such as Hamming distance, $\ell_\infty$-norm, etc. Permutation array (PA) is  simply a special case of FPA by choosing $\lambda=1$.  
With a fixed length $n$, FPA has a smaller set of symbols than PA. Thus, codes with FPA  have  a better information rate than those with PA.
A widely adopted approach to building PAs under Hamming distance, see for example \cite{CCKT03}, is using distance preserving mappings or distance increasing mappings from $Z_2^k$ to $S_n^1$. Most of those encoding schemes are efficient but it is not clear how to decode efficiently. Lin et al. \cite{LTT08} proposed a couple of novel constructions with efficient encoding and decoding algorithms for PAs under $l_\infty$-norm. 
FPA was proposed by Huczynska and Mullen \cite{HM06} as a generalization of PA. They gave several constructions of FPA under Hamming distance and bounds for the maximum array size. In this paper, we extends the ideas in \cite{LTT08} to constructing FPA under $l_\infty$-norm. We prove lower and upper bounds of FPA.  Then we show the efficient encoding and decoding algorithms.  Besides, we show that our FPAs are locally decodable codes under $l_\infty$-norm.

Recently, researchers have found that PAs have  applications in  areas such as power line communication (e.g. \cite{Shum02}, \cite{VH00}, \cite{VHW00} and \cite{Vinck00}), multi-level flash memories (see \cite{Jiang1} and \cite{Jiang2}).   Similar to the application of PAs on power line communication, we
can encode a message as a frequency permutation from $S_n^{\lambda}$ and
associate each symbol $i \in \{1, \dots, m\}$ with a frequency $f_i$.
Then the message is transmitted as  a series of corresponding frequencies. 
For example, to send a message encoded as $(1, 2, 2, 1,3,3)$, we can 
transmit the frequency sequence $(f_1, f_2, f_2, f_1,f_3, f_3)$ one by one. 

For flash memory application, different from the approach by Jiang et. al. \cite{Jiang1,Jiang2}, we can use 
FPA to provide multi-level flash memory with error correcting capability. For example, suppose a multi-level flash memory, where each cell has $m$ states, which can be changed by injecting or removing charge into or from it.
Over injecting or charge leakage will alter the state as well.  
We can use the charge ranks of $n$ cells to represent a permutation from
$S_n^{\lambda}$, i.e., the cells with the lowest $\lambda$ charge levels represent symbol 1, and so on.
With our efficient encoding and decoding algorithms,  a $(\lambda, n, d)$-FPA can be used in flash memory system to represent information and  correct errors caused by charge level fluctuation. 
 

A locally decodable code has an extremely efficient decoding for any message bit by reading at most a fixed number of symbols from the received word. Suppose that a FPA is applied to a multi-level flash memory where the length of a codeword is nearly a block of cells (about $10^5$)\cite{FlashBook1}. This feature allows us to retrieve the desire message bits from a multi-level flash without accessing the whole block. With the locally decodable property, we can raise the robustness of the code without loss of efficiency. On the other hand, locally decodable codes have been under study for years, see \cite{Tre04} for a survey and \cite{Yek08}, \cite{Ef08} for recent progress. They are related to a cryptographic protocol called {\em private information retrieval} (PIR for short). We show our construction of FPA can also be used in cryptographic application.

{\bf Notations:}
Let $m$ and $\lambda$ be positive integers
and let $n=m \lambda$ throughout the paper unless stated otherwise.
We use $[n]$ to represent the set $\{1,\ldots,n\}$. 
$S_{n}^\lambda$ denotes the permutations over the multiset  
$\{\overbrace{1,\dots,1}^\lambda,\dots,\overbrace{m,\dots,m}^\lambda\}$.
For two vectors $\bm{x}$ and $\bm{y}$ of the same dimension, let $l_\infty(\bm{x},\bm{y})=\max_i |\bm{x}_i -\bm{y}_i|$.
We say two permutations $\bm{x}$ and $\bm{y}$ are $d$-close to each other under  metric $\delta(\cdot,\cdot)$ if $\delta(\bm{x},\bm{y})\le d$. The identity permutation $\bm{I}_n^\lambda$ in $S_{n}^\lambda$ is $(1,\dots,1,\dots,m,\dots,m)$. 

\section{Lower and upper bounds}

Let $F_\infty(\lambda,n,d)$ be the cardinality of the maximum $(\lambda,n,d)$-FPA and $V_\infty(\lambda,n,d)$ be the number of elements in $S_{n}^\lambda$ being $d$-close to the identity $\bm{I}_n^\lambda$ under $\ell_\infty$-norm. In this section, 
 we give a Gilbert type lower bound and a sphere packing upper bound of $F_\infty(\lambda,n,d)$  by bounding $V_\infty(\lambda,n,d)$. 
 
 First, we show that any $d$-radius ball in $S_{n}^\lambda$ under $l_\infty$-norm has the same cardinality.
\begin{claim}\label{ballsize}
For any $\bm{x}=(x_1,\ldots,x_n)\in S_{n}^\lambda$, there are exactly 
$V_\infty(\lambda,n,d)$ $\bm{y}$'s in $S_{n}^\lambda$ such that $l_\infty(\bm{x},\bm{y})\le d$.
\end{claim}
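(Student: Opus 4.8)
The plan is to exhibit, for every $\bm{x}\in S_{n}^\lambda$, a distance-preserving bijection of $S_{n}^\lambda$ onto itself that carries the identity $\bm{I}_n^\lambda$ to $\bm{x}$. Any such map sends the $d$-ball centered at $\bm{I}_n^\lambda$ bijectively onto the $d$-ball centered at $\bm{x}$, so the two balls have the same cardinality, namely $V_\infty(\lambda,n,d)$. The one conceptual point to respect is that $S_{n}^\lambda$ is not a group, so we cannot "translate" by $\bm{x}$; instead we let the ordinary symmetric group $S_n$ act by permuting \emph{coordinate positions}.

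Concretely, for $\sigma\in S_n$ define $\pi_\sigma:(z_1,\dots,z_n)\mapsto(z_{\sigma(1)},\dots,z_{\sigma(n)})$. First I would check that $\pi_\sigma$ maps $S_{n}^\lambda$ to itself: it only rearranges the entries of a word, hence preserves the multiset of symbols, and it is a bijection with inverse $\pi_{\sigma^{-1}}$. Next I would check that $\pi_\sigma$ preserves the $\ell_\infty$-norm, since $l_\infty(\pi_\sigma\bm{y},\pi_\sigma\bm{z})=\max_i|y_{\sigma(i)}-z_{\sigma(i)}|=\max_j|y_j-z_j|=l_\infty(\bm{y},\bm{z})$, a relabelling of the index set being harmless.

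The remaining ingredient is transitivity of this action on $S_{n}^\lambda$: given $\bm{x}\in S_{n}^\lambda$, since $\bm{x}$ and $\bm{I}_n^\lambda$ are arrangements of the same multiset, there is $\sigma\in S_n$ with $\pi_\sigma\bm{I}_n^\lambda=\bm{x}$ (choose $\sigma$ sending, for each symbol $i$, the block of positions where $\bm{I}_n^\lambda$ equals $i$ onto the set of positions where $\bm{x}$ equals $i$). Then for $\bm{y}\in S_{n}^\lambda$ we have $l_\infty(\bm{x},\pi_\sigma\bm{y})=l_\infty(\pi_\sigma\bm{I}_n^\lambda,\pi_\sigma\bm{y})=l_\infty(\bm{I}_n^\lambda,\bm{y})$, so $\bm{y}\mapsto\pi_\sigma\bm{y}$ restricts to a bijection from $\{\bm{y}\in S_{n}^\lambda: l_\infty(\bm{I}_n^\lambda,\bm{y})\le d\}$ onto $\{\bm{z}\in S_{n}^\lambda: l_\infty(\bm{x},\bm{z})\le d\}$. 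Counting both sides gives exactly $V_\infty(\lambda,n,d)$ words $\bm{z}$ with $l_\infty(\bm{x},\bm{z})\le d$.

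I do not expect a genuine obstacle in this argument; the only thing that could go wrong is to act on symbols instead of positions, which would neither preserve $\ell_\infty$ nor the "each symbol appears $\lambda$ times" constraint in a useful way. I would therefore present the proof as the short three-step verification above—$\pi_\sigma$ maps $S_{n}^\lambda$ to itself, is $\ell_\infty$-isometric, and acts transitively—followed by the counting conclusion.
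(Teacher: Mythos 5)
Your proof is correct and follows essentially the same route as the paper: both exhibit a permutation of coordinate positions carrying $\bm{I}_n^\lambda$ to $\bm{x}$, observe that this action is an $\ell_\infty$-isometry of $S_n^\lambda$ onto itself, and conclude that it maps the $d$-ball around the identity bijectively onto the $d$-ball around $\bm{x}$. Your write-up is somewhat more explicit than the paper's about why the action preserves $S_n^\lambda$ and the metric, but the underlying idea is identical.
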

\begin{proof}
Since every $i\in[m]$ appears exactly $\lambda$ times in $\bm{x}$, 
there exists a permutation $\pi\in S_n^1$ such that $\bm{x}=\pi\circ\bm{I}_n^\lambda$. As a consequence, we have that $l_\infty(\bm{I}_n^\lambda,\bm{z})=l_\infty(\bm{x},\pi\circ\bm{z})$ for any $\bm{z}\in S_{n}^{\lambda}$. Let $Z=\{\bm{z}:\bm{z}\in S_{n}^{\lambda},l_\infty(\bm{I}_n^\lambda,\bm{z})\le d\}$, $Y=\{\pi\circ\bm{z}:\bm{z}\in Z\}$ and $\bar{Y}=S_{n}^{\lambda}-Y$. For any $\bm{y}\in Y$, we have $l_\infty(\bm{x},\bm{y})=l_\infty(\bm{I}_n^\lambda,\pi^{-1}\circ\bm{y})\le d$, since $\pi^{-1}\circ\bm{y}\in Z$. While  for $\bm{y}'\in \bar{Y}$, $l_\infty(\bm{x},\bm{y}')=l_\infty(\bm{I}_n^\lambda,\pi^{-1}\circ\bm{y}')> d$. Therefore, only $|Y|=|Z|=V_\infty(\lambda,n,d)$ permutations  in $S_n^{\lambda}$ are $d$-close to $\bm{x}$.
\end{proof}


\begin{theorem}\label{bounds}
\[\frac{\left|S_{n}^\lambda\right|}{V_\infty(\lambda,n,d-1)}\le F_\infty(\lambda,n,d)\le\frac{\left|S_{n}^\lambda\right|}{V_\infty(\lambda,n,\lfloor\frac{d-1}{2}\rfloor)}.\]
\end{theorem}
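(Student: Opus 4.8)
The plan is to prove the two inequalities separately by the classical Gilbert–Varshamov (greedy covering) argument and the sphere-packing argument, with Claim~\ref{ballsize} doing the essential work of converting a sum of ball sizes into a clean product with the single number $V_\infty$.

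For the lower bound I would run a greedy construction. Start with $C=\emptyset$, and as long as there exists a permutation in $S_n^\lambda$ whose $\ell_\infty$-distance to every element of $C$ is at least $d$, add one such permutation to $C$. Since $S_n^\lambda$ is finite, the process halts, and by construction the resulting $C$ is a $(\lambda,n,d)$-FPA. When it halts, every $\bm{x}\in S_n^\lambda$ must be $(d-1)$-close to some $\bm{c}\in C$ (otherwise $\bm{x}$ could still be added), so $S_n^\lambda=\bigcup_{\bm{c}\in C}\{\bm{z}\in S_n^\lambda:\ell_\infty(\bm{z},\bm{c})\le d-1\}$. By Claim~\ref{ballsize} each ball on the right has exactly $V_\infty(\lambda,n,d-1)$ elements, hence $|S_n^\lambda|\le |C|\cdot V_\infty(\lambda,n,d-1)$, which gives $F_\infty(\lambda,n,d)\ge |C|\ge |S_n^\lambda|/V_\infty(\lambda,n,d-1)$.

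For the upper bound, let $C$ be any $(\lambda,n,d)$-FPA and set $t=\lfloor(d-1)/2\rfloor$. I would show the radius-$t$ balls centered at the elements of $C$ are pairwise disjoint: if some $\bm{y}$ lay in the balls around distinct $\bm{c}_1,\bm{c}_2\in C$, then since $\ell_\infty$ is a metric, the triangle inequality yields $\ell_\infty(\bm{c}_1,\bm{c}_2)\le \ell_\infty(\bm{c}_1,\bm{y})+\ell_\infty(\bm{y},\bm{c}_2)\le 2t\le d-1<d$, contradicting the defining property of the FPA. These disjoint balls all lie inside $S_n^\lambda$, and by Claim~\ref{ballsize} each has exactly $V_\infty(\lambda,n,t)$ elements, so $|C|\cdot V_\infty(\lambda,n,t)\le |S_n^\lambda|$, i.e.\ $|C|\le |S_n^\lambda|/V_\infty(\lambda,n,t)$; maximizing over $C$ completes the bound.

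I do not expect a serious obstacle: this is the standard packing/covering pair. The points needing care are invoking that $\ell_\infty$ is genuinely a metric so that the triangle inequality is available for the upper bound, checking that the greedy process is well defined and terminates (immediate from $|S_n^\lambda|<\infty$), and — the one place the FPA structure really enters — appealing to Claim~\ref{ballsize} so that a ball contributes the same count $V_\infty$ no matter where it is centered, which is precisely what turns both estimates into the stated closed forms rather than sums of possibly unequal ball sizes.
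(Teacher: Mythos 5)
Your proposal is correct and follows essentially the same route as the paper: a greedy Gilbert-type covering argument for the lower bound (the paper phrases it as removing a $(d-1)$-ball per iteration rather than as a final covering, but it is the same count) and a sphere-packing argument with disjoint radius-$\lfloor(d-1)/2\rfloor$ balls for the upper bound, both relying on Claim~\ref{ballsize} for the uniform ball size. No gaps.
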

\begin{proof}
To prove the lower bound, we use the following algorithm to generate a $(\lambda,n,d)$-FPA with size $\ge\frac{\left|S_{n}^\lambda\right|}{V_\infty(\lambda,n,d-1)}$.
\begin{enumerate}
\item $C\leftarrow \emptyset$, $D\leftarrow S_{n}^\lambda$.
\item Add an arbitrary $\bm{x}\in D$ to $C$, then remove all permutations 
that is $(d-1)$-close to $\bm{x}$ from $D$.
\item If $D\neq\emptyset$ then repeat step 2, otherwise output $C$.
\end{enumerate}
$D$ has initially $|S_{n}^{\lambda}|$ elements and each iteration of step 2 removes at most $V_\infty(\lambda,n,d-1)$, so we conclude $|C|\ge\frac{\left|S_{n}^\lambda\right|}{V_\infty(\lambda,n,d-1)}$.

Now we turn to the upper bound. Consider a $(\lambda,n,d)$-FPA $C^*$ with the maximum cardinality. Any two $\lfloor\frac{d-1}{2}\rfloor$-radius balls centered at distinct permutations in $C^*$ do not have any common permutation, since the minimum distance is $d$. In other words, the $\lfloor\frac{d-1}{2}\rfloor$-radius balls centered at permutations in $C^*$ are all disjoint. We have $|C^*|\le\frac{|S_{n}^\lambda|}{V_\infty(\lambda,n,\lfloor\frac{d-1}{2}\rfloor)}$.
\end{proof}

It is clear that $|S_{n}^\lambda|=\frac{n!}{(\lambda!)^{n/\lambda}}$.
It is already known that $V_\infty(1,n,d)$ equals to the permanent of some special matrix \cite{LTT08}. In this paper, we generalize previous analysis to give asymptotic bounds for  Theorem \ref{bounds}. The permanent of an $n\times n$ matrix $A=(a_{i,j})$ is defined as \[{\rm per}A=\sum_{\pi\in S_n}\prod_{i=1}^na_{i,\pi_i}.\]
Define a symmetric  $n\times n$ matrix $A^{(\lambda,n,d)}=\left(a_{i,j}^{(\lambda,n,d)}\right)$, where $a_{i,j}^{(\lambda,n,d)}=1$, if $\left|\lceil\frac{i}{\lambda}\rceil-\lceil\frac{j}{\lambda}\rceil\right|\le d$; else $a_{i,j}^{(\lambda,n,d)}=0$. 
Note that a permutation $(x_1,\dots,x_n)$ is $d$-close to $\bm{I}^\lambda_n$ if and only if $a_{i,x_i}^{(\lambda,n,d)}=1$ for every $i\in[n]$.  Now we consider $A^{(\lambda,\lambda m,d)}$. 
Since the $\lambda$ copies of a symbol are considered identical while computing the distance and 
the  entries indexed from $(\ell\lambda-\lambda+1)$ to $\ell\lambda$ of $\bm{I}^\lambda_{\lambda m}$ represent
the same symbol for every $\ell\in[m]$. It implies that row $(\ell\lambda-\lambda+1)$ through row $\ell\lambda$ of $A^{(\lambda,\lambda m,d)}$ are identical and so are columns indexed from $(\ell\lambda-\lambda+1)$ to $\ell\lambda$ for every $\ell\in[m]$. Thus, we have $A^{(\lambda,\lambda m,d)}=A^{(1,m,d)}\otimes\bm{1}_\lambda$ where $\otimes$ is the operator of tensor product and $\bm{1}_\lambda$ is a $\lambda\times\lambda$ matrix with all entries equal to 1. For example, take $\lambda=2$, $m=5$ and $d=2$:
\[A^{(1,5,2)}=\left(\begin{array}{ccccc}1&1&1&0&0\\1&1&1&1&0\\1&1&1&1&1\\0&1&1&1&1\\0&0&1&1&1\end{array}\right),\bm{1}_2=\left(\begin{array}{cc}1&1\\1&1\end{array}\right)\]
\[A^{(2,10,2)}=\left(\begin{array}{cccccccccc}1&1&1&1&1&1&0&0&0&0\\1&1&1&1&1&1&0&0&0&0\\1&1&1&1&1&1&1&1&0&0\\1&1&1&1&1&1&1&1&0&0\\1&1&1&1&1&1&1&1&1&1\\1&1&1&1&1&1&1&1&1&1\\0&0&1&1&1&1&1&1&1&1\\0&0&1&1&1&1&1&1&1&1\\0&0&0&0&1&1&1&1&1&1\\0&0&0&0&1&1&1&1&1&1\end{array}\right)\]
Let $r^{(1,m,d)}_i$ be the row sum of  $A^{(1,m,d)}$'s $i$-th row. We have:
\[r^{(1,m,d)}_i=\left\{\begin{array}{ll}d+i&\mbox{if } i\le d,\\2d+1&\mbox{if } d<i\le m-d,\\m-i+1+d&\mbox{if } i>m-d.\end{array}\right.\] 
Then for $i\in[m]$ and $j\in[\lambda]$, the row sum of the $(i\lambda-\lambda+j)$-th row of $A^{(\lambda,\lambda m,d)}$ is $\lambda r^{(1,m,d)}_i$, due to $A^{(\lambda,\lambda m,d)}=A^{(1,m,d)}\otimes\bm{1}_\lambda$.
We first calculate  $V_\infty(\lambda,n,d)$ by using ${\rm per}A^{(\lambda,n,d)}$.

\begin{lemma}
\[V_\infty(\lambda,n,d)={{{\rm per}A^{(\lambda,n,d)}}\over{(\lambda!)^{n/\lambda}}}.\]
\end{lemma}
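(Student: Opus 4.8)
The plan is to set up a bijection-with-multiplicity argument connecting elements of $S_n^\lambda$ that are $d$-close to $\bm{I}_n^\lambda$ with permutations in the full symmetric group $S_n$ counted by the permanent. Recall the key observation already recorded in the excerpt: a vector $(x_1,\dots,x_n)$ is $d$-close to $\bm{I}_n^\lambda$ under $\ell_\infty$ exactly when $a_{i,x_i}^{(\lambda,n,d)}=1$ for all $i\in[n]$, because the entry $a_{i,j}^{(\lambda,n,d)}$ records whether the symbol in position $i$ of $\bm{I}_n^\lambda$, namely $\lceil i/\lambda\rceil$, is within distance $d$ of the symbol $\lceil j/\lambda\rceil$. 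So ${\rm per}A^{(\lambda,n,d)}=\sum_{\pi\in S_n}\prod_i a_{i,\pi_i}^{(\lambda,n,d)}$ counts exactly those $\pi\in S_n$ (genuine permutations of $[n]$, not of the multiset) whose associated ``symbol vector'' $(\lceil \pi_1/\lambda\rceil,\dots,\lceil \pi_n/\lambda\rceil)$ lies in the ball.

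The main step is then to show that the map sending such a $\pi\in S_n$ to the multiset-permutation $\bm{x}(\pi)=(\lceil\pi_1/\lambda\rceil,\dots,\lceil\pi_n/\lambda\rceil)\in S_n^\lambda$ is exactly $(\lambda!)^{n/\lambda}$-to-one onto the set of $\bm{x}\in S_n^\lambda$ with $l_\infty(\bm{I}_n^\lambda,\bm{x})\le d$. First I would check that $\bm{x}(\pi)$ really lies in $S_n^\lambda$: each symbol $\ell\in[m]$ appears exactly $\lambda$ times, since $\pi$ hits each of the $\lambda$ values $\{(\ell-1)\lambda+1,\dots,\ell\lambda\}$ exactly once. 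Next, surjectivity: given any target $\bm{x}\in S_n^\lambda$, build a preimage $\pi$ by, for each symbol $\ell$, assigning the $\lambda$ positions where $\bm{x}$ equals $\ell$ to the $\lambda$ index values $\{(\ell-1)\lambda+1,\dots,\ell\lambda\}$ in some order; any such assignment gives a $\pi\in S_n$ with $\bm{x}(\pi)=\bm{x}$, and $\bm{x}$ is in the ball iff $\pi$ is one of the permutations counted by the permanent. For the fiber size: the preimages of a fixed $\bm{x}$ are obtained by independently permuting, within each symbol class $\ell$, the assignment of the $\lambda$ equal positions to the $\lambda$ consecutive index values — that is $\lambda!$ choices per class and $m=n/\lambda$ classes, hence $(\lambda!)^{n/\lambda}$ preimages, all of which are simultaneously in-or-out of the permanent sum since $\prod_i a_{i,\pi_i}^{(\lambda,n,d)}$ depends only on $\bm{x}(\pi)$. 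Dividing the permanent count by this uniform fiber size yields $V_\infty(\lambda,n,d)$.

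Assembling: ${\rm per}A^{(\lambda,n,d)}=\sum_{\pi:\,l_\infty(\bm{I},\bm{x}(\pi))\le d}1=(\lambda!)^{n/\lambda}\cdot\bigl|\{\bm{x}\in S_n^\lambda: l_\infty(\bm{I}_n^\lambda,\bm{x})\le d\}\bigr|=(\lambda!)^{n/\lambda}\,V_\infty(\lambda,n,d)$, which rearranges to the claimed identity. I expect the only delicate point to be stating cleanly the ``depends only on $\bm{x}(\pi)$'' fact and the fiber-counting — nothing deep, but it needs the right bookkeeping, and it is worth explicitly invoking the block structure $A^{(\lambda,\lambda m,d)}=A^{(1,m,d)}\otimes\bm{1}_\lambda$ so that $a_{i,\pi_i}^{(\lambda,n,d)}=a^{(1,m,d)}_{\lceil i/\lambda\rceil,\lceil \pi_i/\lambda\rceil}$ makes the dependence on symbol-vectors transparent. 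Everything else is a routine counting argument, and one could alternatively phrase it via Claim~\ref{ballsize} applied at $\bm{x}=\bm{I}_n^\lambda$ together with the standard fact that a permanent of a matrix with repeated rows/columns in blocks of size $\lambda$ factors as $(\lambda!)^{\#\text{blocks}}$ times the permanent of the compressed matrix — but the direct fiber argument is cleaner to present.
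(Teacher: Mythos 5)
Your proposal is correct and follows essentially the same route as the paper: the paper likewise interprets ${\rm per}A^{(\lambda,n,d)}$ as counting the $\pi\in S_n^1$ with $\max_i|\lceil i/\lambda\rceil-\lceil\pi_i/\lambda\rceil|\le d$ and then uses the $(\lambda!)^{n/\lambda}$-to-one map $\pi\mapsto(\lceil\pi_1/\lambda\rceil,\dots,\lceil\pi_n/\lambda\rceil)$ onto the $d$-ball in $S_n^\lambda$. Your write-up merely spells out the surjectivity and fiber-size bookkeeping that the paper states in one line.
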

\begin{proof}
\[\begin{array}{ll}
&{\rm per}A^{(\lambda,n,d)}\\
=&|\{\bm{x}\in S_{n}^1:\forall i, a^{(\lambda,n,d)}_{i,x_i}=1\}|\\
=&|\{\bm{x}\in S_{n}^1:\max_i|\lceil\frac{i}{\lambda}\rceil-\lceil\frac{x_i}{\lambda}\rceil|\le d\}|\\
=&(\lambda!)^{n/\lambda}|\{\bm{y}\in S_{n}^\lambda:\max_i|\lceil\frac{i}{\lambda}\rceil-y_i|\le d\}|\\
=&(\lambda!)^{n/\lambda}|\{\bm{y}\in S_{n}^\lambda:l_\infty(\bm{I}_n^\lambda,\bm{y})\le d\}|\\
=&(\lambda!)^{n/\lambda}V_{\infty}(\lambda,n,d)
\end{array}\]
The first equality holds since $A^{(\lambda,n,d)}$ is a $(0,1)$-matrix and by the definition of permanent. 
We can convert $\bm{x}\in S_n^1$ into $\bm{y}\in S_{n}^\lambda$ by setting $y_i=\lceil\frac{x_i}{\lambda}\rceil$, and there are exactly $(\lambda!)^{n/\lambda}$ $\bm{x}$'s in $S_n^1$ converted to the same $\bm{y}$. Thus, we know the third equality holds.
Therefore, the lemma holds by moving $(\lambda!)^{n/\lambda}$ to the left-hand side of the equation.
\end{proof}

We still need to estimate ${\rm per}A^{(\lambda,n,d)}$  
in order to get asymptotic bounds.  
Kl{\o}ve \cite{Klove08} reports some bounds and methods to approximate ${\rm per}A^{(1,n,d)}$. We extend his analysis for ${\rm per}A^{(\lambda,n,d)}$.

\begin{lemma}\label{permU}
${\rm per}A^{(\lambda,n,d)}\le \left[(2d\lambda+\lambda)!\right]^{\left.\frac{n}{2d\lambda+\lambda}\right.}.$
\end{lemma}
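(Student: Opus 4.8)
The plan is to exploit the tensor structure $A^{(\lambda,\lambda m,d)}=A^{(1,m,d)}\otimes\bm{1}_\lambda$ together with the classical Br\'egman–Minc inequality, which states that for an $n\times n$ $(0,1)$-matrix $A$ with row sums $r_1,\dots,r_n$ one has ${\rm per}A\le\prod_{i=1}^n(r_i!)^{1/r_i}$. First I would record that the row sums of $A^{(\lambda,\lambda m,d)}$ are exactly $\lambda r^{(1,m,d)}_i$ for the block of $\lambda$ rows associated with symbol $i$, as already derived in the excerpt from $A^{(\lambda,\lambda m,d)}=A^{(1,m,d)}\otimes\bm{1}_\lambda$. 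Plugging these into Br\'egman–Minc gives
\[
{\rm per}A^{(\lambda,n,d)}\le\prod_{i=1}^{m}\left[\bigl(\lambda r^{(1,m,d)}_i\bigr)!\right]^{\lambda/(\lambda r^{(1,m,d)}_i)}
=\prod_{i=1}^{m}\left[\bigl(\lambda r^{(1,m,d)}_i\bigr)!\right]^{1/r^{(1,m,d)}_i}.
\]

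The second step is to bound each factor. Every row sum satisfies $r^{(1,m,d)}_i\le 2d+1$, hence $\lambda r^{(1,m,d)}_i\le 2d\lambda+\lambda$. The function $x\mapsto (x!)^{1/x}$ is increasing in $x$ for positive integers (this is the same monotonicity fact that underlies the standard statement of Br\'egman's bound), so $\left[(\lambda r^{(1,m,d)}_i)!\right]^{1/r^{(1,m,d)}_i}$ is maximized termwise when $r^{(1,m,d)}_i$ is as large as possible; more precisely I want to compare $\left[(\lambda r)!\right]^{1/r}$ with $\left[(\lambda(2d+1))!\right]^{1/(2d+1)}$ for $r\le 2d+1$, which follows because raising the larger factorial to the larger exponent only increases the value once both quantities exceed $1$. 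Bounding every one of the $m$ factors by $\left[(2d\lambda+\lambda)!\right]^{1/(2d+1)}$ yields
\[
{\rm per}A^{(\lambda,n,d)}\le\left[(2d\lambda+\lambda)!\right]^{m/(2d+1)}=\left[(2d\lambda+\lambda)!\right]^{n/(2d\lambda+\lambda)},
\]
using $n=m\lambda$, which is exactly the claimed bound.

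The step I expect to need the most care is the termwise comparison $\left[(\lambda r)!\right]^{1/r}\le\left[(\lambda(2d+1))!\right]^{1/(2d+1)}$ for $r\in\{d+1,\dots,2d+1\}$ (and the boundary values $r\le d$). Taking logarithms, this reduces to showing $\tfrac{1}{r}\ln(\lambda r)!$ is nondecreasing in $r$, i.e. that $\ln(\lambda r)!$ grows at least linearly in $r$; since $\ln(\lambda(r{+}1))!-\ln(\lambda r)!=\sum_{k=\lambda r+1}^{\lambda r+\lambda}\ln k\ge\lambda\ln(\lambda r+1)\ge\tfrac1r\ln(\lambda r)!$ once $\lambda r\ge 1$ (the last inequality because $\ln(\lambda r)!\le (\lambda r)\ln(\lambda r)\le r\cdot\lambda\ln(\lambda r+1)$ is too weak — one instead bounds $\ln(\lambda r)! \le r \cdot \frac1r\ln(\lambda r)!$ trivially and checks the increment dominates the average), a short direct estimate settles it. Alternatively, and more cleanly, one can avoid the monotonicity discussion entirely by invoking Br\'egman's bound in the form ${\rm per}A\le\prod(r_i!)^{1/r_i}$ and then simply noting each $r_i\le 2d\lambda+\lambda$ inside a single global application — but since the matrix $A^{(\lambda,n,d)}$ has equal blocks, the block version above is the most transparent route.
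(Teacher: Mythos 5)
Your proposal is correct and is essentially the same argument as the paper's: both invoke the Br\'egman--Minc bound ${\rm per}A\le\prod_{i=1}^n(r_i!)^{1/r_i}$ (Theorem 11.5 in \cite{LW03}) and then use that every row sum of $A^{(\lambda,n,d)}$ is at most $2d\lambda+\lambda$, together with the monotonicity of $x\mapsto(x!)^{1/x}$, to bound each factor by $\left[(2d\lambda+\lambda)!\right]^{1/(2d\lambda+\lambda)}$. The paper takes exactly the ``single global application'' you mention at the end, so your block-by-block regrouping and the (somewhat garbled, but ultimately valid) monotonicity digression are extra bookkeeping rather than a different route.
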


\begin{proof}
It is known (Theorem 11.5 in \cite{LW03}) that for $(0,1)$-matrix $A$, ${\rm per}A\le\prod_{i=1}^n(r_i!)^{\frac{1}{r_i}}$ where $r_i$ is the sum of the $i$-th row. Since the sum of any row of $A^{(\lambda,n,d)}$ is at most $2d\lambda+\lambda$, we have ${\rm per}A\le\prod_{i=1}^n[(2d\lambda+\lambda)!]^{\frac{1}{2d\lambda+\lambda}}=[(2d\lambda+\lambda)!]^{\frac{n}{2d\lambda+\lambda}}$.
\end{proof}
We  give ${\rm per}A^{(\lambda,n,d)}$ a lower bound by using the van der Waerden permanent theorem (see p.104 in \cite{LW03}): {\em the permanent of an $n\times n$ doubly stochastic matrix $A$ (i.e., $A$ has nonnegative entries, and every row sum and column sum of $A$ is 1.) is no less than $\frac{n!}{n^n}$.} Unfortunately, $A^{(\lambda,n,d)}$ is not a doubly stochastic matrix, since the row sums and columns sums range from $d\lambda+\lambda$ to $2d\lambda+\lambda$.
We estimate the lower bound via a matrix derived from $A^{(\lambda,n,d)}$ as follows.

\begin{lemma}\label{permL}
${\rm per}A^{(\lambda,n,d)}\ge \frac{(2d\lambda+\lambda)^n}{2^{2d\lambda}}\cdot\frac{n!}{n^n}$.
\end{lemma}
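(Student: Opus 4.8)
The plan is to apply the van der Waerden permanent theorem not to $A^{(\lambda,n,d)}$ itself (which is not doubly stochastic) but to a doubly stochastic matrix $B$ obtained from it by rescaling and a small boundary correction, and then to transfer the resulting estimate back to $\mathrm{per}\,A^{(\lambda,n,d)}$ by a row/column weighting argument. Write $\mu=2d\lambda+\lambda=\lambda(2d+1)$ and assume the interesting range $2d<m$. Using the row-sum formula for $A^{(1,m,d)}$ together with $A^{(\lambda,\lambda m,d)}=A^{(1,m,d)}\otimes\bm{1}_\lambda$, every row sum and column sum of $\tfrac1\mu A^{(\lambda,n,d)}$ equals $1$, except for the rows and columns whose block index $\lceil\cdot/\lambda\rceil$ lies in $\{1,\dots,d\}\cup\{m-d+1,\dots,m\}$, where the sum is strictly smaller.

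To fix the deficient lines I would ``fold'' the band at the two ends. Define $B$ by $B_{ij}=\tfrac1\mu\bigl(a^{(\lambda,n,d)}_{ij}+a'_{ij}\bigr)$, where $a'_{ij}=1$ iff $\lceil i/\lambda\rceil+\lceil j/\lambda\rceil\le d+1$ or $\lceil i/\lambda\rceil+\lceil j/\lambda\rceil\ge 2m-d+1$, and $a'_{ij}=0$ otherwise. Then one checks by elementary block-banded counting (the same style of bookkeeping as in the excerpt) that: (i) $a'_{ij}=1$ implies $|\lceil i/\lambda\rceil-\lceil j/\lambda\rceil|\le d-1$, so $B$ has support inside $A^{(\lambda,n,d)}$ and $0\le B_{ij}\le\tfrac2\mu$; (ii) in each deficient block-row the $\tfrac1\mu a'_{ij}$ terms add up exactly to its deficit, so every row sum of $B$ is $1$, and since $a'$ is symmetric so is $B$, whence every column sum is $1$; (iii) $B_{ij}>\tfrac1\mu$ forces both $i$ and $j$ to be among the $2d\lambda$ coordinates with block index in $\{1,\dots,d\}\cup\{m-d+1,\dots,m\}$. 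By the van der Waerden theorem, $\mathrm{per}\,B\ge n!/n^n$.

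For the transfer step I would use the elementary observation that if $A$ is a $(0,1)$-matrix, $B$ a matrix whose support is contained in that of $A$, and $s_1,\dots,s_n,t_1,\dots,t_n>0$ satisfy $B_{ij}\le s_it_j$ whenever $a_{ij}=1$, then every nonzero term of $\mathrm{per}\,B$ comes from a permutation $\pi$ with $\prod_i a_{i\pi_i}=1$ and contributes at most $\bigl(\prod_i s_i\bigr)\bigl(\prod_j t_j\bigr)$; summing over $\pi$ gives $\mathrm{per}\,B\le\bigl(\prod_i s_i\bigr)\bigl(\prod_j t_j\bigr)\,\mathrm{per}\,A$, i.e. $\mathrm{per}\,A\ge\mathrm{per}\,B/\bigl(\prod_i s_i\prod_j t_j\bigr)$. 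Now take $s_i=t_i=\sqrt{2/\mu}$ for the $2d\lambda$ boundary coordinates and $s_i=t_i=\sqrt{1/\mu}$ for the other $n-2d\lambda$ coordinates; by (i) and (iii) this satisfies $B_{ij}\le s_it_j$ on the support of $A^{(\lambda,n,d)}$, while $\prod_i s_i\prod_j t_j=\bigl(2^{d\lambda}\mu^{-n/2}\bigr)^2=2^{2d\lambda}\mu^{-n}$. Combining with $\mathrm{per}\,B\ge n!/n^n$ yields $\mathrm{per}\,A^{(\lambda,n,d)}\ge\frac{n!}{n^n}\cdot\frac{\mu^n}{2^{2d\lambda}}=\frac{(2d\lambda+\lambda)^n}{2^{2d\lambda}}\cdot\frac{n!}{n^n}$, as claimed; for $\lambda=1$ this recovers the estimate for $\mathrm{per}\,A^{(1,n,d)}$.

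The main obstacle is verifying part (ii): one must confirm that the folded band $a'$ simultaneously stays inside the support of $A^{(\lambda,n,d)}$ and supplies exactly the missing mass in every deficient block-row, and that the two corrections (near the top-left and the bottom-right corner) neither overlap nor overfill any entry --- which is where the hypothesis $2d<m$ is used and where a short but attentive case analysis of the block indices near the ends is required. Everything else is the van der Waerden bound and routine arithmetic.
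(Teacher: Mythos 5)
Your proposal is correct and follows essentially the same route as the paper: your ``folded band'' correction $a'$ adds $\tfrac1\mu$ to exactly the first $(d+1-I)\lambda$ entries of each deficient top block-row and the last $(d+1-I)\lambda$ entries of each deficient bottom block-row, which is precisely the doubly stochastic matrix $B$ the paper builds before invoking van der Waerden. Your transfer step via the weights $s_i t_j$ is a mildly symmetrized restatement of the paper's observation that boundary rows of $B$ are at most $\tfrac{2}{2d\lambda+\lambda}$ times the corresponding rows of $A^{(\lambda,n,d)}$ and the rest exactly $\tfrac{1}{2d\lambda+\lambda}$ times, yielding the identical factor $2^{2d\lambda}$.
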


\begin{proof}
Let $\tilde{A}=\frac{1}{2d\lambda+\lambda}A^{(\lambda,n,d)}$, which has the sum of any row or column bounded by $1$, but is not a doubly stochastic matrix. Observe that every row sum of $\tilde{A}$ is $1$ except the first $d\lambda$ and last $d\lambda$ rows. For $i\in[d]$ and $j\in[\lambda]$, both row $(i\lambda-\lambda+j)$ and row $(n-i\lambda+j)$ sum to $\frac{d+i}{2d+1}$. Now we construct an $n\times n$ matrix $B$ from $\tilde{A}$ with each row sum equal to $1$ as follows: 
\begin{tabbing}
For $i\in[d]$ and $j\in[\lambda]$, add $\frac{1}{2d\lambda+\lambda}$ to\\
1) The first $(d-i+1)\lambda$ entries of row $(i\lambda-\lambda+j)$.\\
2) The last $(d-i+1)\lambda$ entries of row $(n-i\lambda+j)$.
\end{tabbing}
The row sums of the first $d\lambda$ and last $d\lambda$ rows of $B$ are now $\frac{(d-i+1)\lambda}{2d\lambda+\lambda}+\frac{d+i}{2d+1}=1$.

We turn to check the column sums of $B$. Since $\tilde{A}$ is symmetric and by the definition of $B$, we know $B$ is symmetric as well.
Thus we have that $B$ is doubly stochastic and ${\rm per}B\ge\frac{n!}{n^n}$.

Now we turn to bound ${\rm per}A^{(\lambda,n,d)}$. Observe that the entries of the first $d\lambda$ and last $d\lambda$ rows of $B$ are at most $\frac{2}{2d\lambda+\lambda}$ times of the corresponding entries of $A^{(\lambda,n,d)}$, and the other rows are exactly $\frac{1}{2d\lambda+\lambda}$ times of the corresponding rows of $A^{(\lambda,n,d)}$. We have ${\rm per}A^{(\lambda,n,d)}\ge \frac{(2d\lambda+\lambda)^n}{2^{2d\lambda}}{\rm per}B\ge\frac{(2d\lambda+\lambda)^n}{2^{2d\lambda}}\frac{n!}{n^n}$. 
\end{proof}

With Lemma \ref{permU} and Lemma \ref{permL}, we have the asymptotic bounds as follows.

\begin{theorem}
\[\frac{n!}{\left[(2d\lambda-\lambda)!\right]^{\left.\frac{n}{2d\lambda-\lambda}\right.}}\le F_\infty(\lambda,n,d)\le\frac{2^{2\lambda\cdot\lfloor\frac{d-1}{2}\rfloor}n^n}{(2\lambda\cdot\lfloor\frac{d-1}{2}\rfloor+\lambda)^n}.\]
\end{theorem}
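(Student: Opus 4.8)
The plan is to assemble the statement from the ball‑counting bounds of Theorem~\ref{bounds} and the permanent estimates of Lemmas~\ref{permU} and~\ref{permL}, after rewriting everything in terms of permanents. First I would combine $|S_n^\lambda| = n!/(\lambda!)^{n/\lambda}$ with the lemma giving $V_\infty(\lambda,n,e) = {\rm per}\,A^{(\lambda,n,e)}/(\lambda!)^{n/\lambda}$ to cancel the common factor $(\lambda!)^{n/\lambda}$, so that for every radius $e$,
\[\frac{|S_n^\lambda|}{V_\infty(\lambda,n,e)} = \frac{n!}{{\rm per}\,A^{(\lambda,n,e)}}.\]
Substituting $e=d-1$ and $e=\lfloor(d-1)/2\rfloor$ into this identity and into Theorem~\ref{bounds} reduces the claim to
\[\frac{n!}{{\rm per}\,A^{(\lambda,n,d-1)}}\le F_\infty(\lambda,n,d)\le\frac{n!}{{\rm per}\,A^{(\lambda,n,\lfloor(d-1)/2\rfloor)}},\]
so it suffices to bound the two permanents at these radii.

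For the lower bound I would apply Lemma~\ref{permU} with $d$ replaced by $d-1$. Its right‑hand side becomes $[(2(d-1)\lambda+\lambda)!]^{n/(2(d-1)\lambda+\lambda)}$, and since $2(d-1)\lambda+\lambda = 2d\lambda-\lambda$, this equals $[(2d\lambda-\lambda)!]^{n/(2d\lambda-\lambda)}$. Dividing $n!$ by this upper bound for ${\rm per}\,A^{(\lambda,n,d-1)}$ gives exactly the claimed lower bound on $F_\infty(\lambda,n,d)$.

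For the upper bound I would apply Lemma~\ref{permL} with $d$ replaced by $e:=\lfloor(d-1)/2\rfloor$, obtaining ${\rm per}\,A^{(\lambda,n,e)}\ge \frac{(2e\lambda+\lambda)^n}{2^{2e\lambda}}\cdot\frac{n!}{n^n}$. Plugging this into $F_\infty(\lambda,n,d)\le n!/{\rm per}\,A^{(\lambda,n,e)}$ cancels the factor $n!$ and leaves $2^{2e\lambda}n^n/(2e\lambda+\lambda)^n$; rewriting $e$ as $\lfloor(d-1)/2\rfloor$ yields the stated upper bound.

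There is no serious obstacle: the whole argument is an assembly of already‑established results, and the only delicate point is the index bookkeeping — replacing $d$ by $d-1$ in Lemma~\ref{permU} (so that the exponent denominator $2d\lambda-\lambda$ matches the maximal row sum of $A^{(\lambda,n,d-1)}$) and by $\lfloor(d-1)/2\rfloor$ in Lemma~\ref{permL}, together with tracking the cancellations of $(\lambda!)^{n/\lambda}$ and $n!$. One may also sanity‑check the degenerate case $d=1$: then $\lfloor(d-1)/2\rfloor=0$, $A^{(\lambda,n,0)}=A^{(1,m,0)}\otimes\bm{1}_\lambda$ with ${\rm per}\,A^{(\lambda,n,0)}=(\lambda!)^{m}$, and both stated bounds remain valid (the lower one being tight, recovering $F_\infty(\lambda,n,1)=|S_n^\lambda|$).
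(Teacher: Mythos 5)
Your proposal is correct and is exactly the assembly the paper intends (the paper omits the explicit proof, simply citing Theorem~\ref{bounds} together with Lemmas~\ref{permU} and~\ref{permL}): substituting $e=d-1$ into Lemma~\ref{permU} and $e=\lfloor(d-1)/2\rfloor$ into Lemma~\ref{permL}, with the cancellations of $(\lambda!)^{n/\lambda}$ and $n!$ handled as you describe. The index bookkeeping $2(d-1)\lambda+\lambda=2d\lambda-\lambda$ and the $d=1$ sanity check are both right.
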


\section{Encoding and decoding}

Our construction idea is based on the previous work\cite{LTT08} by Lin, et al. We generalize their algorithm for constructing FPAs. Furthermore, we give the first locally decoding algorithm for FPAs under $l_\infty$-norm. 

\subsection{Encoding algorithm}
We give an encoding algorithm $E_{n,k}^\lambda$ (see Figure \ref{encAlgo}) which convert $k$-bit message into a permutation in $S_{n}^\lambda$ where $n \geq k+\lambda$. 

\begin{center}
\begin{figure}[htbp]\begin{tabbing}
{\bf Algorithm $E_{n,k}^\lambda$} \\
{\bf Input:}  $(m_1,\dots,m_{k}) \in Z_2^{k}$\\
{\bf Output:} $(x_1,\dots,x_{n}) \in S_{n}^\lambda$\\
\hspace{.2cm}  $max \leftarrow n;$ $min \leftarrow 1$;\\
\hspace{.2cm} {\bf for} $i\leftarrow 1$ {\bf to} $k$ {\bf do}\\
\hspace*{.6cm}\,\, {\bf if} \= $m_{i}=1$\\
\>{\bf then }\=\{$x_i \leftarrow \lceil\frac{max}{\lambda}\rceil$; $max \leftarrow max - 1$;\}\\
\> {\bf else}\>\{$x_i \leftarrow \lceil\frac{min}{\lambda}\rceil$; $min \leftarrow min + 1$;\}\\
\hspace{.2cm} {\bf for} $i\leftarrow k+1$ {\bf to} $n$ {\bf do}\\
\hspace*{.6cm}\,\, $x_i \leftarrow \lceil\frac{min}{\lambda}\rceil$; $min \leftarrow min + 1$;\\
Output $(x_1,\dots,x_{n})$.
\end{tabbing}
\caption{$E_{n,k}^\lambda$ encodes messages in $Z_2^k$ with $S_{n}^\lambda$.} \label{encAlgo}
\end{figure}
\end{center}
The encoding algorithm $E_{n,k}^\lambda$ maps binary vectors from 
$Z_2^k$ to  $S_{n}^\lambda$ and it is a distance preserving mapping.
It is clear that $E_{n,k}^\lambda$ runs in $O(n)$ time while encoding any $k$-bit message. Next we investigate the properties of the code obtained by $E_{n,k}^\lambda$. Let $C_{n,k}^\lambda$ be the image of $E_{n,k}^\lambda$. 
\begin{theorem}\label{encTHM}
$C_{n,k}^\lambda$ is a $(\lambda,n,\lfloor\frac{n-k}{\lambda}\rfloor)$-FPA with cardinality $2^k$.
\end{theorem}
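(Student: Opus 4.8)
The plan is to establish three things about $E_{n,k}^\lambda$: that its output always lies in $S_n^\lambda$, that it is injective (so $|C_{n,k}^\lambda|=2^k$), and that any two distinct outputs are at $\ell_\infty$-distance at least $\lfloor(n-k)/\lambda\rfloor$. For the membership claim I would track the two counters over the whole execution. If the input has $t$ ones, then $max$ is decremented exactly $t$ times and $min$ is incremented exactly $(k-t)+(n-k)=n-t$ times, so the integers fed into $\lceil\cdot/\lambda\rceil$ are $\{n,n-1,\dots,n-t+1\}$ (from the $max$ counter) and $\{1,2,\dots,n-t\}$ (from the $min$ counter). These two sets partition $[n]$, hence each $j\in[n]$ is used exactly once, the output multiset equals $\{\lceil 1/\lambda\rceil,\dots,\lceil n/\lambda\rceil\}$, i.e.\ every symbol of $[m]$ appears exactly $\lambda$ times, and the counters never leave $[n]$ (since $max$ bottoms out at $n-t\ge n-k\ge\lambda$). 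Thus the output is a genuine element of $S_n^\lambda$.

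The engine for the other two claims is a loop invariant: \emph{just before message bit $i$ is processed ($1\le i\le k$) we have $max-min=n-i$}, since exactly $i-1$ counter updates have occurred and each shrinks the initial gap $n-1$ by one. Because $i\le k$ and $n\ge k+\lambda$, this gives $max-min=n-i\ge n-k\ge\lambda$, hence $\lceil max/\lambda\rceil>\lceil min/\lambda\rceil$, so at every step the ``$1$-choice'' and the ``$0$-choice'' for $x_i$ are distinct symbols. Injectivity follows immediately: a decoder replaying the algorithm sees, at each step $i\le k$, a coordinate $x_i$ that equals $\lceil max/\lambda\rceil$ exactly when $m_i=1$ and equals $\lceil min/\lambda\rceil$ exactly when $m_i=0$, and then updates its counters as the encoder did; distinct messages therefore yield distinct codewords and $|C_{n,k}^\lambda|=2^k$.

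For the distance bound, let $\bm m\ne\bm m'$ and let $i$ be the first index on which they differ; since both are $k$-bit, $i\le k$. On the common prefix the two runs keep identical counters, so before step $i$ both hold the same $max=M$ and $min=\mu$ with $M-\mu=n-i$, and WLOG $m_i=1$, $m_i'=0$, giving $x_i=\lceil M/\lambda\rceil$ and $x_i'=\lceil\mu/\lambda\rceil$. Invoking the elementary fact that $\lceil a/\lambda\rceil-\lceil b/\lambda\rceil\ge\lfloor(a-b)/\lambda\rfloor$ for integers $a\ge b\ge1$ (write $a-b=s\lambda+t$ with $0\le t<\lambda$, so $\lceil a/\lambda\rceil=s+\lceil(b+t)/\lambda\rceil\ge s+\lceil b/\lambda\rceil$), we get $|x_i-x_i'|\ge\lfloor(n-i)/\lambda\rfloor\ge\lfloor(n-k)/\lambda\rfloor$, whence $\ell_\infty(\bm x,\bm x')\ge\lfloor(n-k)/\lambda\rfloor$. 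The only real obstacle is isolating the invariant $max-min=n-i$ and coupling it with the hypothesis $n\ge k+\lambda$ to keep the two counter-driven choices at least one symbol apart throughout the message phase; everything else is bookkeeping. In particular the filler loop $i=k+1,\dots,n$ never enters the distance argument (a differing coordinate is always found among the first $k$) and is needed only for the membership check in the first step.
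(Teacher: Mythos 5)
Your proof is correct and follows essentially the same route as the paper's: locate the first message bit where two inputs differ, observe that the encoder assigns $\lceil max/\lambda\rceil$ versus $\lceil min/\lambda\rceil$ at that coordinate, and bound the gap --- your invariant $max-min=n-i$ is exactly the paper's computation $max=n-r+1+z$, $min=1+z$ in disguise, and your inequality $\lceil a/\lambda\rceil-\lceil b/\lambda\rceil\ge\lfloor(a-b)/\lambda\rfloor$ replaces the paper's strict-inequality-plus-integrality step with the same effect. The extra work you do (verifying that the output actually lies in $S_n^\lambda$, and establishing injectivity via a replay decoder rather than deducing it from the positive minimum distance) is sound and fills in details the paper leaves implicit.
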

\begin{proof}
Consider two messages $\bm{p}=$ $(p_1,\dots,p_k)$ and $\bm{q}=(q_1,\dots,q_k) \in Z_2^k$. Let $\bm{x^p}$ and $\bm{x^q}$ be the outputs of $E_{n,k}^\lambda$, respectively. Let $r$ be the smallest index such that $p_r\neq q_r$. Without loss of generality, we assume $p_r=1$, $q_r=0$ and there are exactly $z$ zeroes among $p_1,\dots,p_{r-1}$. Consequently, $x_r^{\bm{p}}$ is set to $\lceil\frac{max}{\lambda}\rceil=\lceil\frac{n-r+1+z}{\lambda}\rceil$ and $x_r^{\bm{q}}$ is set to $\lceil\frac{min}{\lambda}\rceil=\lceil\frac{1+z}{\lambda}\rceil$ by $E_{n,k}^\lambda$ . 
The distance between  $\bm{x^p}$ and $\bm{x^q}$ is:

\begin{eqnarray*}
&&\left\lceil\frac{n-r+1+z}{\lambda}\right\rceil-\left\lceil\frac{1+z}{\lambda}\right\rceil \\
 > &&\frac{n-r+1+z}{\lambda} - \frac{1+z}{\lambda}-1\\
 = &&\frac{n-r}{\lambda} -1\\
\ge &&\frac{n-k}{\lambda} -1, \mbox{  since $r\le k$}.
\end{eqnarray*} 

The first inequality holds by the fact of ceiling function: 
$ a \le \left\lceil a \right\rceil < a +1,$ for any real number $a$.
Note that the distance has integer value only here.  If $\frac{n-k}{\lambda}$
is integer then the distance is  at least $\left\lfloor\frac{n-k}{\lambda}\right\rfloor$; else it is at least $\left\lceil\frac{n-k}{\lambda}-1\right\rceil$, which is 
$\left\lfloor\frac{n-k}{\lambda}\right\rfloor$ exactly, i.e., the distance between any two codewords in $C_{n,k}^\lambda$ is  at least $\lfloor\frac{n-k}{\lambda}\rfloor$. 
Since every  message is encoded into a distinct codeword, we have 
$C_{n,k}^\lambda=2^k$. \end{proof}

Since $C_{n,k}^\lambda$ is a $(\lambda,n,\lfloor\frac{n-k}{\lambda}\rfloor)$-FPA, we let $d=\lfloor\frac{n-k}{\lambda}\rfloor$  for convenience. 

\subsection{Unique decoding algorithm}

Unique decoding algorithms for classic error correcting codes are usually much more complicated than their encoding algorithms. While, 
our proposed decoding algorithm $U_{n,k}^\lambda$ (see Figure \ref{uniDec}) remains simple.  

\begin{center}
\begin{figure}[htbp]\begin{tabbing}
{\bf Algorithm $U_{n,k}^\lambda$} \\
{\bf Input:} $(x_1,\dots,x_{n}) \in S_{n}^\lambda$\\
{\bf Output:}  $(m_1,\dots,m_{k}) \in Z_2^{k}$\\
\hspace{.2cm}  $max \leftarrow n;$ $min \leftarrow 1$;\\
\hspace{.2cm} {\bf for} $i\leftarrow 1$ {\bf to} $k$ {\bf do}\\
\hspace*{.6cm}\,\, {\bf if} \= $|x_i-\lceil\frac{max}{\lambda}\rceil|<|x_i-\lceil\frac{min}{\lambda}\rceil|$\\
\>{\bf then }\=\{$m_i \leftarrow 1$; $max \leftarrow max - 1$;\}\\
\>{\bf else}\>\{$m_i \leftarrow 0$; $min \leftarrow min + 1$;\}\\
Output $(m_1,\dots,m_{k})$.
\end{tabbing}

\caption{$U_{n,k}^\lambda$ decodes words in $S_{n}^\lambda$ to messages in $Z_2^k$.} \label{uniDec}
\end{figure}
\end{center}

The running time of $U_{n,k}^\lambda$ is clearly $O(k)$, even faster than the encoding algorithm. We show its correctness as follows.

\begin{theorem}
Given a permutation $\bm{x}=(x_1,\dots,x_n)$ which is $\frac{d-1}{2}$-close to  $E_{n,k}^\lambda(\bm{m})$ for some $\bm{m}\in Z_{2}^k$, algorithm $U_{n,k}^\lambda$ outputs $\bm{m}$ correctly.
\end{theorem}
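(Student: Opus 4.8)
The plan is to prove correctness by induction on the loop index $i$, showing that the decoder's $max$ and $min$ variables stay synchronized with those of the encoder that produced $\bm{m}$, so that each bit $m_i$ is recovered correctly. The key observation is that when the encoder processes bit $m_i=1$ it sets $x_i^{E} = \lceil max/\lambda\rceil$ where $max$ and $min$ are the current encoder values, and when $m_i=0$ it sets $x_i^{E} = \lceil min/\lambda\rceil$; meanwhile $max - min$ decreases by exactly $1$ each step regardless of the bit. Since $\bm{x}$ is $\frac{d-1}{2}$-close to $\bm{x}^{E} = E_{n,k}^\lambda(\bm{m})$ under $\ell_\infty$, we have $|x_i - x_i^{E}| \le \frac{d-1}{2}$ for every $i$. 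The decoder's test compares $|x_i - \lceil max/\lambda\rceil|$ against $|x_i - \lceil min/\lambda\rceil|$ using its own $max,min$; so the heart of the argument is a gap lemma: at step $i$, the two candidate symbols $\lceil max/\lambda\rceil$ and $\lceil min/\lambda\rceil$ (for the synchronized values) differ by strictly more than $d-1$, hence more than $2\cdot\frac{d-1}{2}$, so the received $x_i$ cannot be equally close to both and the nearer one is exactly the symbol the encoder wrote.

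First I would set up the induction hypothesis: after processing the first $i-1$ bits correctly, the decoder's variables $max_{i},min_{i}$ equal the encoder's variables at the same point, and $\sum$ of zeros/ones among $m_1,\dots,m_{i-1}$ is reflected identically. Then, at step $i$, I would compute the spread $\lceil max_i/\lambda\rceil - \lceil min_i/\lambda\rceil$. Using $max_i - min_i = n - 1 - (i-1) = n-i \ge n-k$ (since $i\le k$), and the ceiling inequalities $a \le \lceil a\rceil < a+1$, one gets
\[
\left\lceil\frac{max_i}{\lambda}\right\rceil - \left\lceil\frac{min_i}{\lambda}\right\rceil > \frac{max_i - min_i}{\lambda} - 1 \ge \frac{n-k}{\lambda} - 1 \ge d - 1,
\]
exactly mirroring the computation in the proof of Theorem~\ref{encTHM}; and since this difference is an integer it is at least $d$. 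Next I would invoke $|x_i - x_i^{E}| \le \frac{d-1}{2}$: if $m_i=1$ then $x_i^{E}=\lceil max_i/\lambda\rceil$, so $|x_i - \lceil max_i/\lambda\rceil|\le \frac{d-1}{2}$, while by the triangle inequality $|x_i - \lceil min_i/\lambda\rceil| \ge d - \frac{d-1}{2} = \frac{d+1}{2} > \frac{d-1}{2}$, so the decoder's strict inequality test fires and sets $m_i\leftarrow 1$ and updates $max$ exactly as the encoder did; the case $m_i=0$ is symmetric (with the non-strict branch of the test absorbing ties, which in fact never occur). This re-establishes the induction hypothesis for $i+1$.

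The main obstacle I anticipate is bookkeeping the off-by-one/tie-breaking behavior of the decoder's test: the decoder uses a strict inequality $<$ and puts the tie case into the $m_i=0$ branch, so I must make sure the gap bound is strict enough that ties genuinely cannot arise when $\frac{d-1}{2}$ is interpreted as an integer vs. half-integer — i.e. I should be careful whether $d$ is even or odd and whether "$\frac{d-1}{2}$-close" means distance $\le \lfloor\frac{d-1}{2}\rfloor$. Since the difference between the two candidate symbols is at least $d$ (an integer) and $x_i$ is within distance $\lfloor\frac{d-1}{2}\rfloor$ of one of them, the other is at distance at least $d - \lfloor\frac{d-1}{2}\rfloor = \lceil\frac{d+1}{2}\rceil > \lfloor\frac{d-1}{2}\rfloor$, so the strict separation holds and the decoder never lands in the tie branch erroneously. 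The rest is a routine unwinding of the loop, and I would close by noting the base case $i=1$ holds trivially since $max,min$ are initialized identically in both algorithms.
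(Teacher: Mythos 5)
Your proof is correct and follows essentially the same route as the paper: both arguments rest on the decoder's $max,min$ matching the encoder's up to the first potential disagreement, together with the gap bound $\lceil max/\lambda\rceil-\lceil min/\lambda\rceil\ge d$ carried over from the proof of Theorem~\ref{encTHM}. The paper phrases this as a contradiction at the first index where the outputs differ and uses the averaging inequality $|x_r-y_r|\ge\frac{1}{2}(|x_r-\lceil\frac{max}{\lambda}\rceil|+|x_r-\lceil\frac{min}{\lambda}\rceil|)\ge\frac{d}{2}$, whereas you run a forward induction with the triangle inequality; these are interchangeable formulations of the same argument.
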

\begin{proof}
By contradiction, assume $U_{n,k}^\lambda$ outputs $\hat{\bm{m}}=(\hat{m}_1,\cdots,\hat{m}_k)\neq \bm{m}$. Let $E_{n,k}^\lambda(\bm{m})=(y_1,\dots,y_n)$, $r$ be the smallest index such that $m_r\neq \hat{m}_r$ and $z$ be the number of zeroes among $m_1,\dots,m_{r-1}$. 
At the beginning of the $r$-th iteration, $max=n-r+1+z$ and $min=1+z$ because for every $i<r$, $m_i=\hat{m}_i$. Without loss of generality,  assume $1=m_r\neq\hat{m}_r=0$.
Note that $y_r$ is set to $\lceil\frac{max}{\lambda}\rceil$=$\lceil\frac{n-r+1+z}{\lambda}\rceil$ by $E_{n,k}^\lambda$.
While $\hat{m}_r$ is decoded to 0 by $U_{n,k}^\lambda$, we have 
$|x_r-\lceil\frac{max}{\lambda}\rceil| \ge |x_r-\lceil\frac{min}{\lambda}\rceil|$.
Thus,
\[\begin{array}{rcl}l_\infty(\bm{x},E_{n,k}^\lambda(\bm{m}))&\ge&|x_r-y_r|=|x_r-\lceil\frac{max}{\lambda}\rceil|\\&\ge&\frac{1}{2}\left(|x_r-\lceil\frac{max}{\lambda}\rceil|+|x_r-\lceil\frac{min}{\lambda}\rceil|\right)\\
&\ge&\frac{1}{2}\left(\lceil\frac{max}{\lambda}\rceil-\lceil\frac{min}{\lambda}\rceil\right)\\&=&\frac{1}{2}\left(\lceil\frac{n-r+1+z}{\lambda}\rceil-\lceil\frac{1+z}{\lambda}\rceil\right)\ge\frac{d}{2}.
\end{array}\]
The last inequality is true, since we know $\lceil\frac{n-r+1+z}{\lambda}\rceil-\lceil\frac{1+z}{\lambda}\rceil\ge\left\lfloor\frac{n-k}{\lambda}\right\rfloor=d$ from the proof of Theorem \ref{encTHM}.  This contradicts that 
$\bm{x}$  is $\frac{d-1}{2}$-close to  $E_{n,k}^\lambda(\bm{m})$.
\end{proof}

\subsection{Locally decoding algorithm}

Next we show a locally decoding algorithm $L_{n,k}^\lambda$, see Figure \ref{locDec}, which is a probabilistic algorithm. We discuss its efficiency and error probability in this subsection. We prove that it reads at most $\lambda+1$ entries of the received word in Lemma \ref{term}, hence its running time is $O(\lambda)$. It has a chance to output wrongly, but we show that the error probability is small in Theorem \ref{err}. Furthermore, $L_{n,k}^\lambda$ always outputs correct message bit when it was given a codeword as input, see Corollary \ref{code}.

\begin{center}
\begin{figure}[htbp]\begin{tabbing}
{\bf Algorithm $L_{n,k}^\lambda$} \\
{\bf Input:} $i\in[n],(x_1,\dots,x_{n}) \in S_{n}^\lambda$\\
{\bf Output:}  $m_i$, the $i$-th message bit\\
\hspace{.2cm}  $J\leftarrow\{i+1,\dots,n\}$;\\
\hspace{.2cm} {\bf do}\\
\hspace{.6cm}  Uniformly and randomly pick $j\in J$;\\
\hspace{.6cm} {\bf if} \= $x_{i}>x_j$~{\bf then} output 1;\\
\hspace{.6cm} {\bf if} \= $x_{i}<x_j$~{\bf then} output 0;\\
\hspace{.6cm}  $J\leftarrow J-\{j\}$;\\
\hspace{.2cm} {\bf loop};
\end{tabbing}
\caption{$L_{n,k}^\lambda$ decodes one bit by reading at most $\lambda+1$ symbols.} \label{locDec}
\end{figure}
\end{center}

\begin{lemma}\label{term}
Given a permutation $\bm{x}=(x_1,\dots,x_n)\in S_{n,k}^\lambda$ and an index $i\in[k]$, $L_{n,k}^\lambda$ terminates within $\lambda$ iterations.
\end{lemma}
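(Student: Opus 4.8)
The plan is to analyze what the algorithm $L_{n,k}^\lambda$ does on input $(i,\bm{x})$ with $\bm{x}\in C_{n,k}^\lambda$ and show that after at most $\lambda$ draws from $J$ it must have found a coordinate $j$ with $x_j\neq x_i$, forcing an output. The key observation is that $L_{n,k}^\lambda$ only fails to output in a given iteration when the sampled $x_j$ equals $x_i$; since each symbol value occurs exactly $\lambda$ times in any element of $S_n^\lambda$, there are at most $\lambda-1$ indices in $\{i+1,\dots,n\}$ other than $i$ itself whose coordinate equals $x_i$. So after removing $\lambda-1$ such ``bad'' indices from $J$ (worst case), the $\lambda$-th pick is guaranteed to satisfy $x_j\neq x_i$, producing an output. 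Hence the loop runs at most $\lambda$ times.

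First I would recall from the encoding algorithm $E_{n,k}^\lambda$ that coordinate $i\le k$ is assigned either $\lceil max/\lambda\rceil$ or $\lceil min/\lambda\rceil$ at the moment it is processed, and more importantly that the values $x_{i+1},\dots,x_n$ are drawn from the remaining pool of symbols; I want to make precise that the multiset $\{x_{i+1},\dots,x_n\}$ together with $\{x_1,\dots,x_i\}$ forms the full multiset with each symbol appearing $\lambda$ times. Then, letting $c=x_i$, the number of indices $j$ with $i< j\le n$ and $x_j=c$ is at most $\lambda-1$ (it is exactly $\lambda$ minus the number of occurrences of $c$ among $x_1,\dots,x_i$, which is at least one). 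Actually it suffices just to note that in all of $[n]$ the value $c$ occurs $\lambda$ times and one of those is position $i$ itself, so at most $\lambda-1$ of the $\lambda-1$ or fewer positions in $J$ can be ``bad.''

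Next I would run the counting argument on the loop: in iteration $t$, the set $J$ has already had $t-1$ elements removed, all of which were necessarily bad (had coordinate equal to $c$), since otherwise the algorithm would have output and halted. Because there are at most $\lambda-1$ bad indices in the initial $J$, after $\lambda-1$ iterations without output all bad indices have been removed, so in iteration $\lambda$ every remaining $j\in J$ has $x_j\neq c$ and the algorithm outputs. One subtlety to address is that $J$ is nonempty throughout: we need $|J|=n-i\ge \lambda$, which holds because $i\le k$ and $n\ge k+\lambda$, so $n-i\ge n-k\ge\lambda$; this guarantees the $\lambda$-th pick is actually available. I expect the main (minor) obstacle is just bookkeeping the off-by-one in counting bad indices and confirming $|J|\ge\lambda$ from the constraint $n\ge k+\lambda$; the conceptual core — that a value repeats only $\lambda$ times so at most $\lambda-1$ useless samples can precede a useful one — is immediate.
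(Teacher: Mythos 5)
Your argument is correct and is essentially the same as the paper's: the paper phrases it as a contradiction ($\lambda$ failed picks plus position $i$ itself would give $\lambda+1$ occurrences of $x_i$, impossible in $S_n^\lambda$), while you count directly that $J$ contains at most $\lambda-1$ indices with value $x_i$. Your extra check that $|J|=n-i\ge n-k\ge\lambda$ guarantees the $\lambda$-th pick exists is a small point the paper leaves implicit, but it does not change the substance of the proof.
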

\begin{proof}
By contradiction, assume $L_{n,k}^\lambda$ does not output before the end of the $\lambda$-th iteration. For $\ell\le\lambda$, let $j_\ell$ be the index picked in the $\ell$-th iteration. For every $\ell\le\lambda$, we have $x_i=x_{j_\ell}$, otherwise $L_{n,k}^\lambda$ outputs at the $\ell$-th iteration. Therefore, there are at least $\lambda+1$ entries of $\bm{x}$ equal to $x_i$. It implies $\bm{x}\notin S_{n,k}^\lambda$, a contradiction. There is some $x_{j_\ell}\neq x_i$, and $L_{n,k}^\lambda$ outputs in the $\ell$-th iteration.
\end{proof}

\begin{theorem}\label{err}
Given a permutation $\bm{x}=(x_1,\dots,x_n)$ $\delta$-close to a codeword $E_{n,k}^\lambda(\bm{m})=(y_1,\dots,y_n)\in S_{n,k}^\lambda$ for some $\bm{m}$ and an index $i\in[k]$, $L_{n,k}^\lambda$ outputs $m_i$ with probability at least $1-\frac{2\delta+1}{d}$ at its first iteration.
\end{theorem}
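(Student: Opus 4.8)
The plan is to analyze the first iteration of $L_{n,k}^\lambda$, which picks a uniformly random $j \in \{i+1,\dots,n\}$ and compares $x_i$ with $x_j$. First I would identify the value $b = m_i$ and determine what $E_{n,k}^\lambda$ did at step $i$: by the structure of the encoding, if $m_i = 1$ then $y_i = \lceil \mathit{max}/\lambda\rceil$ for the value of $\mathit{max}$ at that step, while every subsequent index $j > i$ is assigned a symbol strictly below $y_i$ (at least one smaller, since each later assignment draws from a shrinking pool either from the top below the current $\mathit{max}$ or from the bottom); symmetrically if $m_i = 0$, then $y_i$ is small and all later $y_j$ are strictly larger. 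So in the codeword itself, $\mathit{sign}(y_i - y_j)$ reveals $m_i$ for every $j > i$ — this is essentially the distance-preserving property already used in Theorem~\ref{encTHM}, and I would cite that proof rather than redo it. Moreover I would want the quantitative version: for $m_i=1$, $y_i - y_j \ge \lceil\frac{n-i+1+z}{\lambda}\rceil - \lceil\frac{\text{(position of }j)}{\lambda}\rceil$, and summing/bounding as in Theorem~\ref{encTHM} gives $y_i - y_j \ge d$ for all $j>i$; symmetrically $y_j - y_i \ge d$ when $m_i=0$.

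Next I would bound how many choices of $j$ can cause an error. Write $\bm{x} = \bm{y} + \bm{e}$ where $\ell_\infty(\bm{x},\bm{y}) = \max_t |e_t| \le \delta$. Suppose $m_i = 1$, so $y_i - y_j \ge d$ for each $j > i$. Then $x_i - x_j = (y_i - y_j) + (e_i - e_j) \ge d - |e_i| - |e_j| \ge d - 2\delta$. As long as $d - 2\delta > 0$ this is positive and the algorithm outputs $1 = m_i$ correctly — but that would give probability $1$, which is stronger than claimed, so I must be careful: the point is that the bound $y_i - y_j \ge d$ holds only in aggregate, not necessarily for every single $j$. The honest statement is that $\sum_{j>i}(y_i - y_j)$ or rather the sorted gaps force $y_i - y_j \ge d$ only for the "worst" comparison; for a given $j$ the gap $y_i - y_j$ could be as small as $1$. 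So the correct accounting is: $L_{n,k}^\lambda$ errs at the first step only if it picks a $j$ with $x_i \le x_j$ (when $m_i=1$), i.e. with $x_j - x_i \ge 0$, i.e. with $y_j - y_i \ge -(e_i - e_j) \ge -2\delta$, i.e. $y_i - y_j \le 2\delta$. I would then bound the number of indices $j \in \{i+1,\dots,n\}$ with $y_i - y_j \le 2\delta$: since the $y_j$ for $j > i$ take on consecutive-ish values below $y_i$ with each symbol-block of size $\lambda$, the number of such "bad" $j$ is at most $2\delta+1$ out of the total... and here I need the denominator $d$, so I would argue the total number of available $j$ is at least $d\lambda$ or rather that the fraction of bad $j$ is at most $\frac{2\delta+1}{d}$ by comparing the count of bad indices to $n - i \ge n-k \ge d\lambda$ and noting the bad ones occupy at most $(2\delta+1)\lambda$ positions. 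Dividing gives the bound $\frac{(2\delta+1)\lambda}{d\lambda} = \frac{2\delta+1}{d}$.

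The symmetric case $m_i = 0$ is handled identically with inequalities reversed. I would conclude that in either case $\Pr[\text{first iteration outputs} \ne m_i \text{ or does not output}] \le \frac{2\delta+1}{d}$, hence $L_{n,k}^\lambda$ outputs $m_i$ with probability at least $1 - \frac{2\delta+1}{d}$ at its first iteration.

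The main obstacle I anticipate is the counting step: pinning down exactly how many $j > i$ have $|y_i - y_j|$ small, since the later entries $y_{i+1},\dots,y_n$ are not simply a contiguous run — the ones forced by $m_{i'} = 1$ for $i' > i$ come from the top of the remaining range and the rest from the bottom. I expect the clean way is to observe that among $y_{i+1},\dots,y_n$, after removing the $\lambda$-fold multiplicity, the values realized are exactly an interval of integers symmetric-ish around the deleted middle, so "within $2\delta$ of $y_i$" captures at most $2\delta$ symbol-values hence at most $(2\delta+1)\lambda$ indices counting the block containing $y_i$ itself — and then the division against $n-i \ge d\lambda$ is immediate. Getting this combinatorial bound stated cleanly, rather than the arithmetic around ceilings, is where the real work lies.
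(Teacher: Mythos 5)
Your proposal is correct and follows essentially the same route as the paper: reduce the first-iteration error event to $\{y_j \le y_i + 2\delta\}$ (or its mirror), count at most $(2\delta+1)\lambda$ such indices among the $n-i \ge d\lambda$ candidates using the explicit multiset structure of $y_{i+1},\dots,y_n$, and divide to get $\frac{2\delta+1}{d}$. The combinatorial worry you flag at the end is resolved exactly as you suspect: the paper writes out that multiset explicitly (a contiguous block of symbol values, each with multiplicity at most $\lambda$) and the count goes through.
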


\begin{proof}
Without loss of generality, we assume $m_i=0$, $y_i=t$ and  let $u$ be the maximum number among $y_{i+1},\dots,y_{n}$, i.e., at the start of the $i$-th iteration $\min=t$ and $\max=u$ while encoding. Assume there are $\gamma$ numbers equal to $t$ among $y_1,\dots,y_{i-1}$, and there are  $\gamma'$ numbers equal to $u$ among $y_{i+1},\dots,y_{n}$. According to the encoding algorithm, we have \[\{y_{i+1},\ldots,y_{n}\}=\{\overbrace{t,\dots,t}^{\lambda-\gamma-1},\overbrace{t+1,\dots,t+1}^\lambda,\dots,\overbrace{u,\dots,u}^{\gamma'}\}\] Since $l_\infty(\bm{x},E_{n,k}^\lambda(\bm{m}))\le\delta$, we have $|x_j-y_j|\le \delta$ and $|x_i-y_i|\le \delta$. The probability that $L_{n,k}^\lambda$ does not output $m_i$ at the first iteration is:
\[\begin{array}{rcl}
\Pr[x_i\ge x_j]&\le&\Pr[y_i+\delta \ge x_j]\\
                    &\le&\Pr[y_i+\delta\ge y_j-\delta]\\
                    &=&\Pr[y_i+2\delta\ge y_j].\end{array}\]
There are at most $2\delta\lambda+\lambda-\gamma-1$ possible $y_j$'s less than or equal to $y_i+2\delta$.  Thus,  \[\Pr[x_i\ge x_j]\le\frac{(2\delta+1)\lambda-\gamma-1}{n-i}\le\frac{2\delta\lambda+\lambda}{d\lambda}=\frac{2\delta+1}{d}.\]
Therefore, the probability that $L_{n,k}^\lambda$ outputs $m_i$ correctly at the first iteration is at least $1-\frac{2\delta+1}{d}$.
\end{proof}

\begin{corollary}\label{code}
Given a codeword $\bm{x}=E_{n,k}^\lambda(\bm{m})$ for some $\bm{m}$ and an index $i$, $L_{n,k}^\lambda$ 
outputs $m_i$ correctly.
\end{corollary}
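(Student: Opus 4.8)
The plan is to specialize the analysis of Theorem~\ref{err} to the noiseless case $\delta=0$. When $\bm{x}=E_{n,k}^\lambda(\bm{m})$ we have $x_j=y_j$ for every $j$, where $(y_1,\dots,y_n)=E_{n,k}^\lambda(\bm{m})$, so the only randomness left is the choice of indices $j\in\{i+1,\dots,n\}$ and each iteration's decision is governed purely by comparing $x_i=y_i$ with $x_j=y_j$. Hence it suffices to establish two things: (i) $L_{n,k}^\lambda$ can never output the bit $1-m_i$; and (ii) $L_{n,k}^\lambda$ halts. Claim (ii) is immediate from Lemma~\ref{term}, which already guarantees termination within $\lambda$ iterations for any input in $S_n^\lambda$ and any $i\in[k]$, so the real work is entirely in (i).

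For (i), assume without loss of generality that $m_i=0$, and write $y_i=t$. The key structural fact is exactly the one extracted in the proof of Theorem~\ref{err}: the multiset $\{y_{i+1},\dots,y_n\}$ contains only symbols $\ge t$. Indeed, $E_{n,k}^\lambda$ assigns position $i$ the value $t=\lceil \min/\lambda\rceil$ from the $\min$-pointer side; thereafter it only ever uses larger $\min$-pointer values (so those assignments are $\ge t$ by monotonicity of the ceiling) or $\max$-pointer values that never fall below where the two pointers ultimately meet (so those assignments are also $\ge t$). Consequently the test ``$x_i>x_j$'', i.e.\ ``$y_i>y_j$'', is \emph{never} satisfied, so $L_{n,k}^\lambda$ never reaches the line that outputs $1$. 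Therefore, whenever it outputs, it outputs $0=m_i$; together with termination from Lemma~\ref{term}, this gives the claim. The case $m_i=1$ is symmetric: then $y_i=\lceil \max/\lambda\rceil$ and every $y_j$ with $j>i$ satisfies $y_j\le y_i$, so the line that outputs $0$ is never reached.

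The only point that needs care is the monotonicity claim $y_j\ge y_i$ (resp.\ $y_j\le y_i$) for all $j>i$; this is precisely the description of $\{y_{i+1},\dots,y_n\}$ already used in the proof of Theorem~\ref{err}, and it rests on the simple observation that in $E_{n,k}^\lambda$ the pointer $\min$ is non-decreasing, the pointer $\max$ is non-increasing, and the two sweep toward one another and end up adjacent, so a symbol placed from below is never larger than any later symbol and a symbol placed from above is never smaller than any later symbol. I expect this bookkeeping about the pointers to be the only mildly delicate step; everything else is routine, and in particular no probabilistic estimate is needed — the corollary holds deterministically. (One could equivalently phrase the argument as: plugging $\delta=0$ into the proof of Theorem~\ref{err} shows the event that the first iteration produces no output is exactly the event $x_i=x_j$, and on the complementary event the output is forced to be correct by the same monotonicity; iterating over rounds and invoking Lemma~\ref{term} then finishes.)
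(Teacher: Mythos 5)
Your proof is correct and follows essentially the same route as the paper's: termination is delegated to Lemma~\ref{term}, and correctness of the emitted bit comes from the monotonicity of the encoder's $\min$/$\max$ pointers, which is exactly what the paper compresses into the line ``$x_i<x_j$ implies $m_i=0$ and $x_i>x_j$ implies $m_i=1$.'' You simply spell out that pointer bookkeeping in more detail than the paper does.
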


\begin{proof}
By Lemma \ref{term},  there exists $\ell\le \lambda$ such that $L_{n,k}^\lambda$ terminates at the $\ell$-th iteration. Let $j$ be the index picked at the $\ell$-th iteration, we have $x_j\neq x_i$, where $j>i$. 
Note that $\bm{x}$ is a codeword: $x_i< x_j$ implies $m_i=0$ and $x_i>x_j$  implies $m_i=1$. Hence, $L_{n,k}^\lambda$ outputs $m_i$ correctly.
\end{proof}

A private information retrieval system (PIR) consists of $q$ servers. All servers know a codeword $\bm{x}=(x_1,\dots,x_n)$ representing a message $\bm{m}=(m_1,\dots,m_k)$, and a user wants to know one bit $m_i$ of $\bm{m}$ via query a symbol from each server. We say a PIR has  {\em retrievability} $r$ if the user can obtain the message bit with probability $r$. Let $\mathcal{D}(s,i)$ be the distribution of entry queried from server $i$ when the user tries to retrieve $m_i$. A PIR has {\em privacy} $p$ if $\max_{i,j\in[k],s\in[q]}\Delta(\mathcal{D}(s,i),\mathcal{D}(s,j))\le p$, where $\Delta(\cdot,\cdot)$ is the statistical distance. A $(q,r,p)$-PIR is a $q$-server PIR with retrievability $r$ and privacy $p$. A $(q,r,p)$-PIR has perfect retrievability if $r=1$ and perfect privacy if $p=0$.

With our FPA $C_{n,k}^\lambda$, we construct a $(\lambda+1,1,r)$-PIR with perfect retrievability and privacy $r$. 
The scheme is simple:
\begin{itemize}
\item For a message $\bm{m}$, we put $\bm{x}=E_{n,k}^\lambda(\bm{m})$ on all $\lambda+1$ servers.
\item We retrieve $m_i$ by $L_{n,k}^\lambda$ by querying entries from servers in a random order.
\end{itemize}
The perfect retrievability is guaranteed by Corollary 1. However, in order to retrieve $m_i$, $x_i$ must be queried from some servers at certain positions $\ell > i$, and we have $r > 0$.  We leave the improvement on the privacy $r$ as our future work.

\end{document}